\newtheorem{theorem}{Theorem}
\newtheorem{corollary}[theorem]{Corollary}
\newtheorem{definition}{Definition}
\newtheorem{proposition}[theorem]{Proposition}
\newtheorem{lemma}{Lemma}
\begin{document}

\title{
{Role of quantum state texture in probing resource theories and quantum phase transitions}
}

\author{Ayan Patra, Tanoy Kanti Konar, Pritam Halder, Aditi Sen(De)}

\affiliation{ Harish-Chandra Research Institute,  A CI of Homi Bhabha National Institute, Chhatnag Road, Jhunsi, Prayagraj - $211019$, India}

\begin{abstract}

Building on the recently developed quantum state texture resource theory, we exhibit that the difference between maximum and minimum textures is a valid purity monotone in any dimension and provide a lower bound for existing purity measures. We introduce a texture-based resource monotone applicable across general convex resource theories, encompassing quantum coherence, non-stabilizerness, and entanglement. In particular, we propose the notion of non-local texture, which corresponds to the geometric measure of bipartite and multipartite entanglement in pure states. Furthermore, we demonstrate that the texture of the entire ground state or its subsystems can effectively signal quantum phase transitions in the Ising chain under both transverse and longitudinal magnetic fields, offering a powerful tool for characterizing quantum criticality.


\end{abstract}

\maketitle
\section{Introduction}
\label{sec:intro}


Resource theories offer a structured framework for quantifying and comparing the resource content of quantum states under constraints on allowable operations~\cite{chitambar_rmp_2019}. This approach is particularly appealing in practice, as it aligns with experimentally feasible quantum operations.
For instance, different sets of freely accessible quantum operations, such as local operations and classical communication (LOCC), stabilizer operations (SO), thermal operations (TO), and incoherent operations (IO), naturally lead to distinct resource theories, namely those of entanglement~\cite{Bennett_PRA_1996,formation96,Vedral_PRL_1997,Vedral_PRA_1998,vidal_prl_1999,horodecki_rmp_2009}, non-stabilizerness~\cite{Veitch2014Jan,howard_prl_2017,knill2004,liu_prxq_2022,leone_pra_2024,bravyi_pra_2005}, athermality~\cite{brandao_prl_2013, Janzing_IJTP_2000,Masanes_NatureCom_2017, Brandao_PNAS_2015,Gour_NatureCom_2018,Ng_SIP_2018}, and coherence~\cite{baumgratz_prl_2014, streltsov_rmp_2017,winter_prl_2016,streltsov_prl_2015,Streltsov2018May}, respectively. Given that processes such as decoherence or certain quantum operations quickly diminish the quantum characteristics of a system~\cite{Horodecki_ROMP_2003,Yu_Science_2009,Liu_PRL_2017,Pal_JOPA_2015,Muhuri_arXiv_2023,Patra_arXiv_2024}, adopting a resource-theoretic perspective is a natural choice for analyzing quantum systems.
Interestingly, many of these theories are interconnected; for instance, incoherent operations cannot generate entanglement from incoherent states, linking the resource theories of coherence and entanglement ~\cite{streltsov_prl_2015}. Exploring such interconnections offers deep insights and exciting research opportunities in quantum science.
%

Recently, Parisio~\cite{Parisio2024} introduced the resource theory of quantum state texture, which characterizes the quantum states using the real parts of their elements of the density matrix  in a chosen basis. Within this framework, a state is considered texture-less and thus free if all real parts of its density matrix elements are equal. The associated free operations are those that preserve the texture-less states. This theory also shows promise in practical applications, such as quantum gate identification in quantum circuits. Note further that like coherence, the texture resource theory is inherently basis-dependent. However, its connection to established resource theories, such as coherence, entanglement, or non-stabilizerness, remains unexplored. Establishing this connection and exploring its applications in detecting cooperative phenomena is crucial, as it may offer new perspectives and deeper insights into the structure and interplay of known resource theories through the lens of quantum state texture.


In this paper, we explore the connection between the quantum texture of a state and various resource theories (see Fig.~\ref{fig:schematics}), particularly focusing on the state’s purity and its resource content within a convex resource-theoretic framework. Building on this foundation, we introduce a texture-based measure of quantum purity (see Ref.~\cite{Wang_PRA_2025} for details on different texture measures) that satisfies the standard conditions of a purity monotone ~\cite{Streltsov2018May,Gour_PhysicsReports_2015}. Notably, this measure also provides a lower bound on the widely used \textit{R\'enyi}-$2$ purity~\cite{Streltsov2018May,Gour_PhysicsReports_2015}.

We further establish a general link between quantum state texture and a broad class of convex resource theories (CRTs)~\cite{chitambar_rmp_2019}, which are characterized by the property that all free pure states are interconvertible via free unitary operations. Prominent examples of such CRTs include entanglement, coherence, and non-stabilizerness.
To quantify resource content in this setting, we define a {\it texture-based resource monotone} by minimizing the texture of a quantum state over a set of basis choices relevant to the resource theory in question. We derive explicit and computable forms for the texture-based coherence and non-stabilizerness monotones. Furthermore, to characterize non-local correlations in multipartite systems, we introduce the notion of {\it non-local texture}, defined as the minimum texture across all local bases. We demonstrate that this quantity serves as a valid entanglement measure for both bipartite and multipartite pure states.

Identifying cooperative phenomena in quantum many-body systems remains a significant challenge, particularly when aiming for experimentally accessible methods. While quantum features such as entanglement \cite{horodecki_rmp_2009}, coherence \cite{streltsov_rmp_2017}, and quantum discord \cite{Modi2012,Bera_2017} have proven to be effective in detecting such phenomena, they often require complete knowledge of the quantum state, which is difficult to obtain in practical settings. In contrast, the quantum texture monotone offers a key advantage: it can be estimated without full-state tomography, making it a promising tool for probing many-body systems. To investigate this potential, we analyze the behavior of quantum texture in both integrable and non-integrable versions of the Ising chain \cite{sachdev_2011}. Specifically, we compute the texture of the ground state and its subsystems for the transverse-field Ising model and for the Ising model under an additional longitudinal magnetic field. For the integrable case, we derive an analytical expression for the ground state texture and show that it exhibits a noticeable change in curvature near the quantum phase transition point. For the non-integrable model, the addition of a longitudinal field breaks the system’s symmetry, and this symmetry-breaking is also reflected in the ground state texture. These findings demonstrate that quantum texture serves as an effective and experimentally friendly indicator of quantum phase transitions.

This paper is structured as follows. Sec.~\ref{sec:res_th_texture} provides a brief overview of the resource theory of quantum state texture. In Sec.~\ref{sec:texture_purity}, we introduce a texture-based measure of purity. Sec.~\ref{sec:texture_resource} presents the construction of texture-based resource measures for specific convex resource theories, with detailed examples illustrating non-stabilizerness, coherence, and entanglement. In Sec.~\ref{sec:texture_QPT}, we demonstrate how quantum state texture can serve as a tool for detecting quantum phase transitions in both integrable and non-integrable models. Finally, Sec.~\ref{sec:conclusion} presents our concluding remarks. 



\section{Resource theory of quantum texture}
\label{sec:res_th_texture}
A resource theory is defined by imposing a specific restriction on the set of quantum operations that can be implementable without incurring any cost. Based on this restriction, for a given resource theory, one identifies the set of free states, which are those states that can be generated using only the free operations. Crucially, free operations must transform any free state into another free state. An essential component of any resource theory is the construction of a resource monotone, a function that can characterize the resource manipulation under the application of free operations. 

{\it Quantum state texture \cite{Parisio2024}}. 
Let us consider a given density operator $\rho$ acting on $\mathbb C^d$ is written in a fixed basis, $\mathsf B\equiv\{\ket{i}\}_{i=1}^d$ as $\rho=\sum_{i,j}\rho_{ij}\ketbra{i}{j}$, where $d$ is the dimension of the Hilbert space.  The landscape of real part of $\rho$, $\Re(\rho)$, in the aforementioned basis can be observed by visualizing $\bra i\Re(\rho)\ket j=\Re(\rho_{ij})$ against $\{i,j\}$ 
(for a $3$D-graphical illustration, see Fig.~$1$ of Ref.~\cite{Parisio2024}). The texture of the landscape turns out to be useful in characterizing unknown quantum gates, especially CNOT gates, in universal circuit layers. 

{\it Free states and free operations in texture theory.} The only resourceless state in the resource theory of texture is the density matrix $\rho$ whose elements are all identical, i.e., $\rho_{ij}=1/d \,~ \forall ~\{i,j\}$. This corresponds to the pure state $\rho=s_1\equiv\ketbra{s_1}{s_1}$, where $\ket{s_1}=\frac{1}{\sqrt{d}}\sum_{i=1}^d\ket{i}$. Note that this state possesses maximal coherence \cite{streltsov_rmp_2017} with respect to the basis $\mathsf B$. A quantum operation $\Lambda_{\mathcal{T}}(*) = \sum_i K_i (*) K_i^\dagger$, satisfying $\sum_i K_i^\dagger K_i = I$, is considered as a free operation if it leaves $s_1$ invariant, i.e., $\Lambda_{\mathcal{T}}(s_1) = s_1$. Since $s_1$ is a pure state, this condition implies that each Kraus operator acts on $\ket{s_1}$ as $K_i \ket{s_1} = c_i \ket{s_1}~\forall~i$, where $c_i$ are constants.


\begin{figure}
    \centering
    \includegraphics[width=\linewidth]{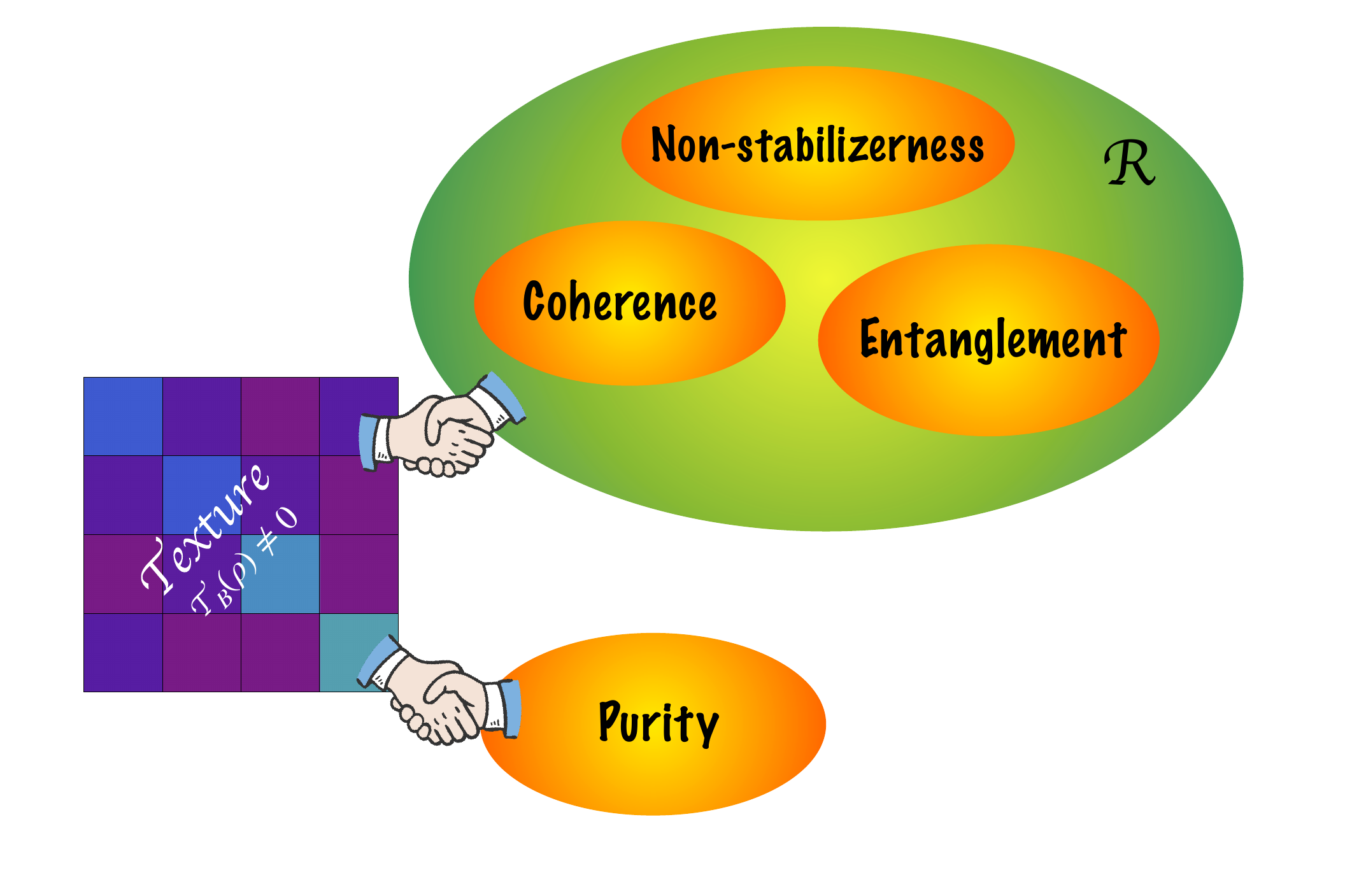}
    \caption{\textbf{Connecting quantum state texture to other resource theories.} Real parts of all elements of a four-dimensional density matrix \(\left(\text{Re}[\rho_{ij}]\right)\) are displayed. In the case of a textured state, the elements exhibit varying colors, indicating underlying structure in the density matrix. Conversely, the texture-less state would have a uniform color across all elements, reflecting a lack of structure. In a given basis, each element of the texture-less state's density matrix takes the value \(1/d\), where \(d\) is the dimension of the system (here, \(d=4\)). We connect texture resource theory to purity and other resource theories $\mathcal R$, which consists of free pure states that are interconnected by free unitaries of $\mathcal R$, e.g., non-stabilizerness, coherence, and entanglement.}
    \label{fig:schematics}
\end{figure}

{\it Texture monotone.} To serve as a valid measure of texture of a given state $\rho$, the texture monotone $\mathcal T$ must satisfy the following properties --- $(1)$ \textit{faithfulness}: $\mathcal{T}(\rho)\geq0$, where the equality sign holds if and only if $\rho=s_1$; $(2)$ \textit{monotonicity}: texture must not increase under free operation, i.e., $\mathcal T(\Lambda_{\mathcal T} (\rho))\leq \mathcal{T}(\rho)$ $\forall \rho$; and $(3)$ \textit{convexity}: $\mathcal T(\rho)$ should be a convex function of density matrices which means $\mathcal{T}(\sum_kp_k\rho_k)\leq\sum_kp_k\mathcal T(\rho_k)$. Note that while positivity and monotonicity are essential requirements for a valid resource monotone,  convexity is often desirable but not strictly necessary properties.

\noindent Let us define the measure of texture corresponding to the basis $\mathsf B$ as 
\begin{equation}
    \mathcal T_{\mathsf B}(\rho)= 1-\frac{1}{d}\sum_{i,j=1}^d\rho_{ij} = 1-\frac{1}{d}\mathcal E(\rho),
    \label{eq:texture_measure}
\end{equation}
where $\mathcal E(\rho)=\sum_{i,j=1}^d\rho_{ij}=d\bra{s_1}\rho\ket{s_1}$ is the sum of all the elements of $\rho$, known as the `grand sum'. Note that $\mathcal{E}(\rho)/d$ denotes the probability of projecting the state $\ket{s_1}$ onto the density matrix $\rho$, which implies that $\mathcal{E}(\rho)$ lies within the range $0 \leq \mathcal{E}(\rho) \leq d$, and consequently, $0\leq \mathcal T_{\mathsf B}\leq 1$. Note that \(\mathcal{T}_{\mathsf{B}}(\rho)\) can be expressed as \(\mathcal{T}_{\mathsf{B}}(\rho) = 1 - e^{-\mathfrak{R}(\rho)}\), where \(\mathfrak{R}(\rho) = -\ln\left(\mathcal{E}(\rho)/d\right)\) denotes the \textit{state rugosity}~\cite{Parisio2024}, and takes values in the range \(0 \leq \mathfrak{R} \leq \infty\). The \textit{faithfulness} and \textit{monotonicity} of \(\mathcal{T}_{\mathsf{B}}(\rho)\) directly follow from the fact that \(\mathfrak{R}(\rho)\) itself is a faithful texture monotone under general free operations \(\Lambda_{\mathcal{T}}\)~\cite{Parisio2024}. Furthermore, since $\mathcal E(*)$ is a linear functional, it immediately follows that \(\mathcal T_\mathsf{B}(\sum_kp_k\rho_k)=\sum_k p_k \mathcal{T}_\mathsf{B}(\rho_k)\), implying that \(\mathcal T_\mathsf{B}(\sum_kp_k\rho_k)\) is linear in its argument.

Given a basis \(\mathsf{B}\), the set of \textit{maximally resourceful states}, defined as \(\{\rho \mid \mathcal{T}_\mathsf{B}(\rho) = 1\}\), consists of all \textit{Fourier states} of the form \(\ket{s_j} = \frac{1}{\sqrt{d}} \sum_{k=1}^d \omega_d^{(k-1)(j-1)} \ket{k},\)
where \(1 < j \leq d\), \(\omega_d = \exp(2\pi \iota / d)\), and $\iota=\sqrt{-1}$. The set \(\{\ket{s_j}\}_{j=1}^d\) forms an orthonormal \textit{Fourier basis}, satisfying \(\braket{s_j}{s_k} = \delta_{jk}\). Therefore, any superposition or mixture of the basis states \(\{\ket{s_j}\}\) with \(j > 1\), results in a state that achieves the maximal texture value \(\mathcal{T}_\mathsf{B} = 1\).

As the texture of a density matrix depends on the chosen basis, we introduce below the notions of \emph{max-texture} and \emph{min-texture} associated with a quantum state \(\rho\).
\begin{definition}
    \label{def:max_min_texture}
    For a given quantum state $\rho$, the \textit{max-texture} and \textit{min-texture}, denoted by $\mathcal{T}^{\max}_{\mathbf{B}}(\rho)$ and $\mathcal{T}^{\min}_{\mathbf{B}}(\rho)$ respectively, are defined as the maximum and minimum values of the texture $\mathcal{T}_{\mathsf{B}}(\rho)$ taken over a specified set of orthonormal bases $\mathbf{B} = \{\mathsf{B}\}$. Formally,
\begin{equation}
\nonumber\mathcal{T}^{\max(\min)}_{\mathbf{B}}(\rho) = \underset{\mathbf{B}=\{\mathsf{B}\}}{\max(\min)} \mathcal{T}_{\mathsf{B}}(\rho).
\end{equation}
\end{definition}
\noindent
Alternatively, these extremal values can be expressed using unitary operators $U$ acting on a fixed reference basis $\widetilde{\mathsf{B}} \in \mathbf{B}$, such that the transformed basis $U\widetilde{\mathsf{B}} \in \mathbf{B}$. Mathematically, we can write
\begin{eqnarray}
\mathcal{T}^{\max(\min)}_{\mathbf{B}}(\rho) &=& 1 - \underset{\{U \mid U\widetilde{\mathsf{B}} \in \mathbf{B}\}}{\min(\max)} \bra{\tilde s_1} U\rho U^\dagger \ket{\tilde s_1},
\label{eq:def1}
\end{eqnarray}
where $\ket{\tilde s_1}$ denotes the only texture-less state with respect to the basis $\widetilde{\mathsf{B}}$. When $\mathbf{B}$ comprises all orthonormal bases in the Hilbert space, the max- and min-texture are denoted simply by $\mathcal{T}^{\max}(\rho)$ and $\mathcal{T}^{\min}(\rho)$, respectively.


\begin{lemma}
\label{lemma:max_min_texture}
For a given state \(\rho\), the max-texture and min-texture are given by \(\mathcal{T}^{\max}(\rho) = 1 - \lambda_d^\downarrow\) and \(\mathcal{T}^{\min}(\rho) = 1 - \lambda_1^\downarrow\), respectively. Here \(\lambda_1^\downarrow\) and \(\lambda_d^\downarrow\) denote the largest and smallest eigenvalues of the density matrix \(\rho\), respectively.
\end{lemma}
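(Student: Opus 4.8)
The plan is to reduce the optimization in Equation~\eqref{eq:def1} to a standard variational characterization of the extremal eigenvalues of $\rho$. Starting from the expression $\mathcal{T}^{\max(\min)}(\rho) = 1 - \min(\max) \bra{\tilde s_1} U \rho U^\dagger \ket{\tilde s_1}$, the key observation is that as $U$ ranges over all unitaries on $\mathbb{C}^d$, the vector $U^\dagger \ket{\tilde s_1}$ ranges over \emph{all} normalized pure states $\ket{\psi}$ in the Hilbert space. This is because any unit vector can be mapped to $\ket{\tilde s_1}$ by a suitable unitary. Hence the optimization $\min_U \bra{\tilde s_1} U \rho U^\dagger \ket{\tilde s_1}$ becomes simply $\min_{\ket{\psi}} \bra{\psi} \rho \ket{\psi}$ over all normalized $\ket{\psi}$, and likewise for the maximum.

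First I would make this substitution explicit and note that it removes all reference to the particular reference basis $\widetilde{\mathsf{B}}$, since the free state $\ket{\tilde s_1}$ only enters through the quantity $U^\dagger \ket{\tilde s_1}$, which is unconstrained. Next I would invoke the Rayleigh--Ritz variational principle for Hermitian operators: for the density matrix $\rho$ with eigenvalues ordered as $\lambda_1^\downarrow \geq \lambda_2^\downarrow \geq \cdots \geq \lambda_d^\downarrow$, one has $\max_{\ket{\psi}} \bra{\psi}\rho\ket{\psi} = \lambda_1^\downarrow$ and $\min_{\ket{\psi}} \bra{\psi}\rho\ket{\psi} = \lambda_d^\downarrow$, with the extrema attained by the corresponding eigenvectors. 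Substituting these into the expression for the texture yields
\begin{equation}
\nonumber \mathcal{T}^{\max}(\rho) = 1 - \lambda_d^\downarrow, \qquad \mathcal{T}^{\min}(\rho) = 1 - \lambda_1^\downarrow,
\end{equation}
which is precisely the claim. Note the sign/ordering flip: the \emph{max}-texture corresponds to the \emph{minimum} overlap $\lambda_d^\downarrow$, and vice versa, which is an easy consequence of the minus sign in Equation~\eqref{eq:def1}.

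The one genuine step requiring care is justifying that $\{U^\dagger \ket{\tilde s_1} : U \text{ unitary}\}$ is the full set of unit vectors, and in particular that no constraint analogous to $U\widetilde{\mathsf{B}} \in \mathbf{B}$ survives when $\mathbf{B}$ is the set of \emph{all} orthonormal bases. Since every orthonormal basis arises as $U\widetilde{\mathsf{B}}$ for some unitary $U$, the constraint set is all of the unitary group, so the reduction is unconstrained and the argument goes through cleanly. The Rayleigh--Ritz principle itself is entirely standard, so I do not anticipate any substantive obstacle; the main thing to get right is the bookkeeping of which extremal eigenvalue pairs with which extremal texture.
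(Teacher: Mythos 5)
Your proposal is correct and follows essentially the same route as the paper's proof: both reduce the optimization in Eq.~\eqref{eq:def1} to maximizing/minimizing the overlap $\bra{\psi}\rho\ket{\psi}$ over all unit vectors (the paper does this via the spectral decomposition and exhibiting the optimal unitaries $U_{\min}\ket{\psi_d}=\ket{s_1}$, $U_{\max}\ket{\psi_1}=\ket{s_1}$, which is just the Rayleigh--Ritz principle you invoke). Your explicit justification that $\{U^\dagger\ket{\tilde s_1}\}$ exhausts all unit vectors, and the careful note on the sign flip pairing the max-texture with $\lambda_d^\downarrow$, match the paper's reasoning exactly.
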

\begin{proof}
    Given a density matrix $\rho\in\mathbb C^d$, its spectral decomposition can be written as $\rho = \sum_{i=1}^d\lambda_i^{\downarrow}
    \ketbra{\psi_i}{\psi_i}$, where the eigenvalues $\{\lambda_i^\downarrow\}$ are arranged in descending order, i.e., \(1\geq\lambda_i^\downarrow\geq\lambda_{i+1}^\downarrow\geq0~\forall i\). The texture of the state \(\rho\) corresponding to a basis $\mathsf B$ can be written as
    \begin{eqnarray}
        \mathcal{T}_{\mathsf{B}}(\rho) &=& 1-\bra{s_1}\rho\ket{s_1}
        = 1-\sum_{i=1}^d\lambda_i^\downarrow|\braket{s_1}{\psi_i}|^2,
    \end{eqnarray}
    where $\ket{s_1}$ is the only texture-less state relative to the basis $\mathsf B$. Using Eq.~\eqref{eq:def1}, the expression for \(\mathcal{T}^{\max}(\rho)\) can be written as
    \begin{eqnarray}
        \nonumber\mathcal{T}^{\max}(\rho)&=&1-\min_U\bra{s_1}U\rho U^\dagger\ket{s_1}=1-\lambda_d^\downarrow,\nonumber
    \end{eqnarray}
   where the optimal unitary \(U_{\min}\) satisfies \(U_{\min}\ket{\psi_d} = \ket{s_1}\). Similarly, the minimum texture \(\mathcal{T}^{\min}(\rho)\) is given by \(\left(1 - \lambda_1^\downarrow\right)\), where the optimal unitary \(U_{\max}\) satisfies \(U_{\max}\ket{\psi_1} = \ket{s_1}\)~\cite{Biswas_PRA_2014}.
\end{proof}

\section{Quantification of purity through texture}
\label{sec:texture_purity}
Pure quantum states serve as valuable resources in numerous quantum communication and computation tasks. Historically, the majority of communication and computational protocols have been formulated and studied using pure states. Here, we introduce a texture-based measure for the purity of quantum states, thereby establishing a connection between texture and purity. 

In the resource theory of purity, the maximally mixed state is the only free state, while all pure states possess the same level of purity and represent the maximum resourceful states~\cite{Horodecki_PRA_2003, Gour_PhysicsReports_2015, Streltsov2018May, Luo_EPJD_2019}. The sets of free operations - mixture of unitaries $(\{\Lambda_{MU}\})$, noisy operations $(\{\Lambda_{NO}\})$, and unital operation $(\{\Lambda_U\})$ - form a hierarchical structure: $\{\Lambda_{MU}\}\subseteq\{\Lambda_{NO}\}\subseteq\{\Lambda_{U}\}$~\cite{Streltsov2018May}. Since unital maps constitute the largest set, any resource monotone under unital operations remains valid for the other two sets. Note that in the qubit scenario, all three sets are equivalent~\cite{Gour_PhysicsReports_2015}. 

On the other hand, in the resource theory of texture, we notice -- $(1)$ the maximally mixed state is the only state whose texture remains unchanged under any choice of basis; $(2)$ a pure state can be both resourceless or maximally resourceful as well, depending on the chosen basis. In order to connect texture and purity in a unified framework, we define purity monotone in terms of texture as 
\begin{equation}
    \mathcal{P}(\rho)=d\left(\mathcal{T}^{\max}(\rho)-\mathcal{T}^{\min}(\rho)\right).
    \label{eq:texture_based_purity}
\end{equation}
Here, \(d\) denotes the dimension of the system, and \(\mathcal{T}^{\max(\min)}\) are defined in the Definition~\ref{def:max_min_texture}, where \(\mathbf{B}\) encompasses all orthonormal bases in the \(d\)-dimensional Hilbert space.

\begin{theorem}
\label{th:purity_measure}
    The texture-based monotone of purity is a valid purity monotone under the set of unital operations.
\end{theorem}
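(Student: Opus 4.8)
The plan is to first collapse $\mathcal{P}$ into a simple spectral functional using Lemma~\ref{lemma:max_min_texture}, and then verify the defining properties of a purity monotone --- faithfulness, monotonicity under unital maps, and convexity --- directly for that functional. Substituting $\mathcal{T}^{\max}(\rho) = 1 - \lambda_d^\downarrow$ and $\mathcal{T}^{\min}(\rho) = 1 - \lambda_1^\downarrow$ into Eq.~\eqref{eq:texture_based_purity} gives at once $\mathcal{P}(\rho) = d\,(\lambda_1^\downarrow - \lambda_d^\downarrow)$, i.e.\ $d$ times the spectral range of $\rho$. This reformulation is the workhorse of the whole argument, since it converts every required property into a statement about the ordered eigenvalue vector $\vec{\lambda}^\downarrow = (\lambda_1^\downarrow,\dots,\lambda_d^\downarrow)$.

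For faithfulness I would note that $\lambda_1^\downarrow \ge \lambda_d^\downarrow$ always, so $\mathcal{P}(\rho) \ge 0$, with equality precisely when all eigenvalues coincide; since $\sum_i \lambda_i^\downarrow = 1$, this forces $\lambda_i^\downarrow = 1/d$ for all $i$, i.e.\ $\rho = I/d$, the unique free state. At the other extreme, any pure state has $\lambda_1^\downarrow = 1$ and $\lambda_d^\downarrow = 0$, giving $\mathcal{P} = d$, the maximal value, which confirms that all pure states are assigned the same maximal purity as required in the resource theory of purity.

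The central step is monotonicity under unital operations, where I expect the main work to lie. The key fact I would invoke is that any unital channel degrades the spectrum in the sense of majorization, $\vec{\lambda}^\downarrow(\Lambda_U(\rho)) \prec \vec{\lambda}^\downarrow(\rho)$, which follows from the standard result that a unital trace-preserving map connects input and output spectra through a doubly stochastic matrix. It then suffices to show that the spectral range $R(\vec{\lambda}) = \lambda_1^\downarrow - \lambda_d^\downarrow$ is Schur-convex, so that it cannot increase under majorization. I would establish this by writing $\lambda_1^\downarrow = \max_i \lambda_i$ and $-\lambda_d^\downarrow = \max_i(-\lambda_i)$, each a maximum of symmetric linear functions and hence symmetric and convex, therefore Schur-convex; their sum $R$ is Schur-convex, which yields $\mathcal{P}(\Lambda_U(\rho)) \le \mathcal{P}(\rho)$. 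Because $\{\Lambda_{MU}\} \subseteq \{\Lambda_{NO}\} \subseteq \{\Lambda_U\}$, monotonicity under the largest set automatically secures it for the other two.

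Finally, for convexity I would use the variational characterizations $\lambda_1^\downarrow(\rho) = \max_{\ket{\psi}} \bra{\psi}\rho\ket{\psi}$ and $\lambda_d^\downarrow(\rho) = \min_{\ket{\psi}}\bra{\psi}\rho\ket{\psi}$. The former is a pointwise maximum of linear functionals of $\rho$ and so is convex, while $-\lambda_d^\downarrow(\rho) = \max_{\ket{\psi}}\bigl(-\bra{\psi}\rho\ket{\psi}\bigr)$ is convex for the same reason; adding them shows that $\mathcal{P}(\rho) = d(\lambda_1^\downarrow - \lambda_d^\downarrow)$ is convex in $\rho$, giving $\mathcal{P}\bigl(\sum_k p_k \rho_k\bigr) \le \sum_k p_k \mathcal{P}(\rho_k)$. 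The only genuinely nontrivial ingredient is the unital-map majorization result; once that is in hand, faithfulness and convexity follow from elementary spectral and variational arguments.
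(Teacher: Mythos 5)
Your proof is correct and follows essentially the same route as the paper: both collapse $\mathcal{P}(\rho)=d\,(\lambda_1^\downarrow-\lambda_d^\downarrow)$ via Lemma~\ref{lemma:max_min_texture}, prove faithfulness identically, and obtain monotonicity from the majorization $\vec{\lambda}(\Lambda_U(\rho))\prec\vec{\lambda}(\rho)$ induced by unital maps --- the paper cites this as Uhlmann's theorem and reads off the $n=1$ and $n=d-1$ partial sums, which is exactly the content of your Schur-convexity argument for the spectral range. Your convexity step via the variational characterizations $\lambda_1^\downarrow(\rho)=\max_{\ket{\psi}}\bra{\psi}\rho\ket{\psi}$ and $\lambda_d^\downarrow(\rho)=\min_{\ket{\psi}}\bra{\psi}\rho\ket{\psi}$ is likewise a repackaging of the paper's argument, which evaluates the linear texture functional of each $\rho_k$ at the mixture's optimal unitaries and invokes their suboptimality for the individual $\rho_k$.
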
 
\begin{proof} For $\mathcal{P}(\rho)$ to qualify as a purity monotone, it must satisfy the following properties:

$(1)$ \textit{Faithfulness}: From Lemma~\ref{lemma:max_min_texture}, one can express  
\begin{equation}
    \mathcal{P}(\rho)=d\left(\lambda_1^\downarrow(\rho)-\lambda_d^\downarrow(\rho)\right).
\end{equation}
Since the maximally mixed state is the only state in any dimension for which $\lambda_1^\downarrow=\lambda_d^\downarrow=\frac{1}{d}$, it follows that $\mathcal{P}(\mathbb{I}/d)=0$, while for all other states, $\mathcal{P}(\rho)>0$. Furthermore, pure states attain the maximum possible purity, which is equal to $d$.

$(2)$ \textit{Monotonicity}: By Uhlmann's theorem~\cite{Uhlmann_ROMP_1970,Nielsen_2012}, given two arbitrary density operators $\rho$ and $\sigma$, acting on same Hilbert space $\mathbb{C}^d$, $\Lambda_U(\rho)=\sigma\iff\rho\succ\sigma\iff\sum_{i=1}^n\lambda_i^\downarrow(\rho)\geq\sum_{i=1}^n\lambda_i^\downarrow(\sigma)~\forall ~n\in\{1,2,\cdots,d\}$. {The majorization condition (i.e., $\rho\succ\sigma$) thus consists of $d$ inequalities indexed by $n$. In particular, choosing $n=1$ yields $\lambda_1^\downarrow(\rho) \geq \lambda_1^\downarrow(\sigma)$, while setting $n=d-1$ implies $\lambda_d^\downarrow(\rho) \leq \lambda_d^\downarrow(\sigma)$, using the normalization $\sum_{i=1}^d \lambda_i(\rho)=\sum_{i=1}^d \lambda_i(\sigma)=1$.} Consequently, we obtain $(\lambda_1^\downarrow(\rho)-\lambda_d^\downarrow(\rho))\geq(\lambda_1^\downarrow(\sigma)-\lambda_d^\downarrow(\sigma))$. Hence, $\mathcal{P}(\rho)\geq\mathcal{P}(\Lambda_U\rho)~\forall\rho,~\text{and}~\Lambda_U$.

$(3)$ \textit{Convexity}: To establish the convexity of  $\mathcal{P}(\rho)$, we start with its definition,
\begin{eqnarray}
    \frac{1}{d}\mathcal{P}(\rho)&=&\mathcal{T}^{\max}(\rho)-\mathcal{T}^{\min}(\rho),\nonumber\\
    &=&\max_{\mathsf{B}}\mathcal{T}_{\mathsf{B}}(\rho)-\min_B\mathcal{T}_{\mathsf{B}}(\rho),\nonumber\\
    &=&\max_U\mathcal{T}_{\mathsf{B}_C}(U\rho U^\dagger)-\min_{U'}\mathcal{T}_{\mathsf{B}_C}(U'\rho U'^{\dagger}),\nonumber\\
    &=&\mathcal{T}_{\mathsf{B}_C}(U_{\max}^{\rho}\rho U_{\max}^{\rho\dagger})-\mathcal{T}_{\mathsf{B}_C}(U_{\min}^{\rho}\rho U_{\min}^{\rho\dagger}).
    \label{eq:convexity_1}
\end{eqnarray}
Here, $\mathsf{B}_C$ denotes the computational basis, and $U_{\max(\min)}^{\rho}$ is the unitary that attains the maximum (minimum) texture for the state $\rho$. Considering $\tilde{\rho}=\sum_k p_k\rho_k$, we can write
\begin{align}
   &\frac{1}{d}\mathcal{P}\left(\sum_k p_k\rho_k\right)\nonumber\\&=\mathcal{T}_{\mathsf{B}_C}(U_{\max}^{\tilde{\rho}}\tilde{\rho} U_{\max}^{\tilde{\rho}\dagger})-\mathcal{T}_{\mathsf{B}_C}(U_{\min}^{\tilde{\rho}}\tilde{\rho} U_{\min}^{\tilde{\rho}\dagger})\nonumber\\
   &=\mathcal{T}_{\mathsf{B}_C}\left(\sum_kp_kU_{\max}^{\tilde{\rho}}\rho_k U_{\max}^{\tilde{\rho}\dagger}\right)-\mathcal{T}_{\mathsf{B}_C}\left(\sum_kp_kU_{\min}^{\tilde{\rho}}\rho_k U_{\min}^{\tilde{\rho}\dagger}\right)\nonumber\\
   &=\sum_kp_k\Big[\mathcal{T}_{\mathsf{B}_C}(U_{\max}^{\tilde{\rho}}\rho_k U_{\max}^{\tilde{\rho}\dagger})-\mathcal{T}_{\mathsf{B}_C}(U_{\min}^{\tilde{\rho}}\rho_k U_{\min}^{\tilde{\rho}\dagger})\Big]\nonumber\\
   &\leq\sum_kp_k\Big[\mathcal{T}_{\mathsf{B}_C}(U_{\max}^{\rho_k}\rho_k U_{\max}^{\rho_k\dagger})-\mathcal{T}_{\mathsf{B}_C}(U_{\min}^{\rho_k}\rho_k U_{\min}^{\rho_k\dagger})\Big]\nonumber\\
   &=\frac{1}{d}\sum_kp_k\mathcal{P}(\rho_k),
   \label{eq:convexity_2}
\end{align}
where, in the third and fourth lines, we have utilized the fact that both the unitary transformation and $\mathcal{T}_{\mathsf{B}_C}$ act linearly on density operators. The final line follows directly from Eq. (\ref{eq:convexity_1}). This completes the proof.
\end{proof}

In the literature, there exists a family of purity monotones, known as \textit{R\'enyi} $\alpha$-purity~\cite{Gour_PhysicsReports_2015,Streltsov2018May}. This class of monotones is defined as $\mathbb{P}_\alpha(\rho)=\log_2d-S_\alpha(\rho)$, where $S_\alpha(\rho)=\frac{1}{1-\alpha}\log_2\left(\text{Tr}(\rho^\alpha)\right)$ is the R\'enyi $\alpha$-entropy of a density matrix \(\rho\) acting on a Hilbert space $\mathbb{C}^d$. {The quantity $\mathbb{P}_{\alpha}(\rho)$ admits an operational interpretation in the resource theories of purity. In particular, the quantity $\mathbb{P}_{\alpha}(\rho)$ reduces to the single-shot distillable purity in the limit $\alpha \to 0$, while in the opposite limit $\alpha \to \infty$, it corresponds to the single-shot purity cost. Moreover, the \textit{R\'enyi} $2$- purity $\mathbb{P}_2(\rho) = \log_2\left(d\, \mathrm{Tr}(\rho^2)\right)$ is a simple function of the linear purity $\mathrm{Tr}(\rho^2)$, and can be directly measured by interfering two identical copies of the quantum state  with each other~\cite{filip2002,Ekert_PRL_2002,Brun2004,Pichler_NJP_2013}. Below, we derive a lower bound on $\mathbb{P}_2(\rho)$ in terms of the texture-based purity monotone $\mathcal{P}(\rho)$. Conversely, an upper bound on $\mathcal{P}(\rho)$ can be expressed as a function of $\mathbb{P}_2(\rho)$.}

\begin{proposition}
\label{prop:purity_measures_bound}
In a \( d \)-dimensional Hilbert space \( \mathbb{C}^d \), the \textit{R\'enyi} $2$-purity \(\mathbb{P}_2(\rho)\) and the texture-based purity monotone \(\mathcal{P}(\rho)\) satisfy the inequality
\begin{equation}
    \label{eq:purity_inequality}
    \mathbb{P}_2(\rho) \geq \log_2\left[1 + \frac{\mathcal{P}(\rho)^2}{2d}\right].
\end{equation}
{Equality holds in the case of rank-$2$ density matrices which also incorporate density matrices acting on two-dimensional Hilbert space, i.e., \(\mathbb{C}^2\).}
\end{proposition}
\begin{proof}
    The \textit{R\'enyi} $2$-purity of a quantum state \( \rho \) acting on the Hilbert space \( \mathbb{C}^d \) is given by \( \mathbb{P}_2(\rho) = \log_2\left(d\, \mathrm{Tr}(\rho^2)\right) \). In contrast, the texture-based purity measure is defined as \( \mathcal{P}(\rho) = d\left(\lambda_1^\downarrow - \lambda_d^\downarrow\right) \).
    In order to establish the inequality ~\eqref{eq:purity_inequality}, we only need to show that
    \begin{equation}
        d\sum_{i=1}^d x_i^2\geq 1 + \frac{[d(x_1-x_d)]^2}{2d},
        \label{eq:ineq}
    \end{equation}
    where \(x_1\geq x_2\geq\cdots\geq x_d\geq 0\), and \(\sum_{i=1}^dx_i=1\). \\
    Let us interpret \( \{x_i\}_{i=1}^d \) as the possible values of a discrete random variable \( X \). Under the aforementioned constraints, the mean of \( X \) is \( 1/d \). The variance of \( X \) can be expressed as
    \begin{eqnarray}
        &&\frac{1}{d}\sum_i x_i^2-\frac{1}{d^2}=\frac{1}{d}\sum_i\left(x_i-\frac{1}{d}\right)^2,\nonumber\\
        \implies&& d\sum_i x_i^2=1+d\sum_i\left(x_i-\frac{1}{d}\right)^2.\nonumber
    \end{eqnarray}
   Now, since \( x_1 \geq \frac{1}{d} \geq x_d \), we have
  \begin{eqnarray}
    d\sum_i\left(x_i-\frac{1}{d}\right)^2&&\geq d\left[\left(x_1-\frac{1}{d}\right)^2+\left(\frac{1}{d}-x_d\right)^2\right],\nonumber\\
    && \geq d\frac{(x_1-x_d)^2}{2}\nonumber\\
    &&\geq d^2\frac{(x_1-x_d)^2}{2d},\nonumber
    \end{eqnarray}
    which establishes inequality~\eqref{eq:ineq}.\\
    Now, identifying \(x_i=\lambda_i^\downarrow\), the eigenvalues of \( \rho \), we obtain \(d\text{Tr}(\rho^2)\geq 1+\frac{\mathcal{P}(\rho)^2}{2d}\implies\mathbb{P}_2(\rho)\geq\log_2\left[1+\frac{\mathcal{P}(\rho)^2}{2d}\right]\), which proves the inequality~\eqref{eq:purity_inequality}.
    {Finally, it can be verified through straightforward algebra that the equality holds for rank-$2$ density matrices, which also include density matrices acting on $\mathbb{C}^2$.} This completes the proof.
\end{proof}

{Note that, in terms of distinguishability of states in $d>2$, $\mathbb{P}_2(\rho)$ is more capable than $\mathcal{P}(\rho)$. This is due to the fact that $\mathbb P_2(\rho)$ depends on the full spectrum of $\rho$ and is not one-to-one function of $\mathcal P(\rho)$. However, for $d=2$ they are equally powerful as evident from Proposition~\ref{prop:purity_measures_bound}.}

{\textit{Experimental realization.} The estimation of $\mathcal P(\rho)$ can be performed following full state reconstruction of $\rho$ by standard tomographic methods \cite{Nielsen_2012}. Nonetheless, a substantial body of literature demonstrates that tomography is not strictly necessary -- the eigenvalue spectrum of $\rho$ can be obtained via direct spectrum-estimation protocols~\cite{Tanaka_PRA_2014, Lloyd_Nature_2014} as well as variational approaches, including variational quantum state diagonalization~\cite{LaRose_NPJ_QI_2019} and the variational quantum state eigensolver (VQSE)~\cite{Cerezo_NPJ_QI_2022}. In particular, the minimal and single-shot implementable setup can be executed by allowing multiple copies of $\rho$ to interfere -- an approach that has already been demonstrated in both linear optical and superconducting platforms~\cite{Tanaka_PRA_2014}, whereas VQSE requires only a single copy of the state $\rho$ per iteration, rendering them suitable for near-term quantum devices~\cite{Cerezo_NPJ_QI_2022}. Furthermore, by invoking the inequality in Eq.~(\ref{eq:purity_inequality}), an upper bound on the texture-based purity can be inferred from the \textit{R\'enyi} $2$-purity, which may be more readily accessible in experimental settings~\cite{filip2002,Ekert_PRL_2002,Brun2004,Pichler_NJP_2013}. However, this bound becomes weaker with growing system dimension. Additionally, any von Neumann measurement of $\rho$ in an arbitrary basis yields a lower bound on its actual purity, since nonselective von Neumann measurements correspond to unital channels. Finally, for states that are not full rank (i.e., $\lambda_d^\downarrow(\rho)=0$), the quantity $\lceil \log_2 \mathcal{P}(\rho) \rceil$ coincides with the single-shot purity cost~\cite{Streltsov2018May}.}

\section{Construction of monotones for convex resource theories based on texture}
\label{sec:texture_resource}
As discussed in the previous section, a resource theory is fundamentally characterized by three key components: the set of free states, the set of free operations, and a resource monotone that quantifies resource manipulation~\cite{chitambar_rmp_2019,horodecki_rmp_2009,streltsov_rmp_2017,Veitch2014Jan}. Consider a resource theory $\mathcal{R}$ defined by a set of free states $\mathcal{F_R}$ and a corresponding set of free operations $\mathcal{O_R}$. The theory $\mathcal{R}$ is said to be \textit{convex} if any convex combination of two free states remains free, i.e., $p\rho+(1-p)\sigma\in\mathcal{F_R}\;\forall\;\{\rho,\sigma\}\in\mathcal{F_R},\;0\leq p\leq 1$~\cite{horodecki_rmp_2009,baumgratz_prl_2014,Veitch2014Jan,Patra_PRA_2023}. Let us investigate how the texture properties of a quantum state relate to its resource content under a given convex resource theory $\mathcal{R}$ that satisfies certain conditions: $(i)$ $\mathcal{R}$ admits free pure states and free unitary operations, and $(ii)$ any pair of free pure states can be transformed into one another via a free unitary operation. Examples of such resource theories include coherence, non-stabilizerness, and entanglement, among others. {In particular, we demonstrate that for any convex resource theory $\mathcal{R}$ satisfying the aforementioned properties, all the free states possess vanishing min-texture with respect to a well-defined set of bases specifically tailored to $\mathcal{R}$. Recall that, for a given set of bases, the min-texture of a state is defined as the minimum amount of texture the state can exhibit across those bases.} To this end, we define a texture-based measure of the resource $\mathcal{R}$ for pure states as
\begin{equation}
    \mathcal{M}_\mathcal{R}(\ket{\psi})=\min_{U\in U_\mathcal{R}}\mathcal{T}_{\mathsf{B}_{\mathcal R}}(U\ket{\psi}),
    \label{eq:resource_texture}
\end{equation}
{where minimization is performed over the set $U_\mathcal{R}$ of free unitary operations defined by the resource theory $\mathcal{R}$. The texture is evaluated with respect to a resource-specific basis $\mathsf{B}_{\mathcal R}$. Although the choice of $\mathsf{B}_{\mathcal R}$ is not unique, it is constrained by the requirement that the texture-less state with respect to $\mathsf{B}_{\mathcal R}$ must be a free state within the resource theory $\mathcal{R}$. For instance, in non-stabilizerness theory, $\mathsf{B}_{\mathcal R}$ is naturally identified with the computational basis. Moreover, the set of free unitary operations $U_\mathcal{R}$ is fully specified by the underlying resource theory.} For example, in the resource theory of entanglement, $U_\mathcal{R}$ consists of all local unitaries of the form $U_1 \otimes U_2$; in the case of non-stabilizerness, it corresponds to the set of Clifford unitaries $U_C$; and for coherence, it includes incoherent unitaries $U_I$. Eq.~\eqref{eq:resource_texture} can be interpreted as follows. \textit{For a convex resource theory $\mathcal{R}$ satisfying the aforementioned properties, consider $\mathsf{B}_{\mathcal R}$ to be a basis whose corresponding texture-less state is a free state in $\mathcal{R}$. Let us define the set of bases $B_{\mathcal{R}} = \{U \mathsf{B}_{\mathcal R}:U\in U_{\mathcal{R}}\}$, consisting of all bases obtained by applying elements of $U_{\mathcal{R}}$ to $\mathsf{B}_{\mathcal R}$. For every free state in $\mathcal{R}$, there exists at least one basis in $B_{\mathcal{R}}$ with respect to which the state has vanishing texture. In contrast, any resourceful state necessarily exhibits nonzero texture in all such bases.}

We now show that $\mathcal{M}_{\mathcal{R}}(\ket{\psi})$ is a pure state resource monotone under the resource theory $\mathcal{R}$. A pure state resource monotone is a function that does not increase under any free operation that maps pure states to pure states (see e.g., Ref.~\cite{leone_pra_2024} for non-stabilizerness). Let $\tilde{\mathcal{O}}_{\mathcal{R}}$ denotes the collection of all such free operations. For $\mathcal{M_R}$ to qualify as a pure state resource monotone, it must satisfy the condition $\mathcal{M_R}\left(\Lambda(\ketbra{\psi}{\psi})\right) \leq \mathcal{M}_{\mathcal R}\left(\ketbra{\psi}{\psi}\right)$ for every operation $\Lambda \in \tilde{\mathcal{O}}_{\mathcal{R}}$. Note that, $U_{\mathcal R}$ is a subset of $\tilde{\mathcal{O}}_{\mathcal R}$. 
\begin{theorem}
    The texture-based monotone, $\mathcal{M_R}(\ket{\psi})$ is a valid pure state resource monotone within the convex resource theory $\mathcal{R}$, provided that $\mathcal{R}$ includes free pure states and free unitary operations, and that any two free pure states can be connected through a free unitary transformation.
    \label{th:measure}
\end{theorem}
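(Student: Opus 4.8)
The plan is to recast $\mathcal{M}_{\mathcal{R}}$ into a familiar geometric-measure form and then verify the two defining properties separately, with monotonicity (the condition the paper requires of a pure-state monotone) as the main event. Since $\mathcal{T}_{\mathsf{B}_{\mathcal R}}(U\ket{\psi}) = 1 - |\bra{s_1}U\ket{\psi}|^2$, where $\ket{s_1}$ is the texture-less (hence, by construction, free) state of $\mathsf{B}_{\mathcal R}$, I would write $\mathcal{M}_{\mathcal R}(\ket{\psi}) = 1 - \max_{U\in U_{\mathcal R}}|\bra{s_1}U\ket{\psi}|^2$. The free unitaries form a group and, by hypothesis $(ii)$, act transitively on the free pure states; hence the orbit $\{U^{\dagger}\ket{s_1}: U\in U_{\mathcal R}\}$ is exactly the set $\mathcal{F}^{\mathrm{pure}}_{\mathcal R}$ of all free pure states. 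This yields the reformulation $\mathcal{M}_{\mathcal R}(\ket{\psi}) = 1 - \max_{\ket{f}\in\mathcal{F}^{\mathrm{pure}}_{\mathcal R}}|\braket{f}{\psi}|^2$, i.e.\ $\mathcal{M}_{\mathcal R}$ is the overlap deficit with the nearest free pure state---the geometric measure of $\mathcal{R}$. Faithfulness is then immediate: $\mathcal{M}_{\mathcal R}(\ket{\psi})\ge 0$, with equality iff some free pure $\ket{f}$ satisfies $|\braket{f}{\psi}|=1$, i.e.\ iff $\ket{\psi}$ is itself free (the maximum being attained since $\mathcal{F}^{\mathrm{pure}}_{\mathcal R}$ is compact and the overlap continuous).

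For monotonicity I would fix $\Lambda\in\tilde{\mathcal{O}}_{\mathcal R}$ with $\Lambda(\ketbra{\psi}{\psi})=\ketbra{\psi'}{\psi'}$ pure, and let $\ket{f^{*}}$ attain the maximum overlap for $\ket{\psi}$. Because $\ket{f^{*}}$ is free and $\Lambda$ is a free operation, $\tau:=\Lambda(\ketbra{f^{*}}{f^{*}})$ is a (generally mixed) free state. Using the convention $F(\ketbra{\psi}{\psi},\sigma)=\bra{\psi}\sigma\ket{\psi}$, the monotonicity of fidelity under CPTP maps gives $\bra{\psi'}\tau\ket{\psi'}=F\!\left(\ketbra{\psi'}{\psi'},\tau\right)\ge F\!\left(\ketbra{\psi}{\psi},\ketbra{f^{*}}{f^{*}}\right)=|\braket{f^{*}}{\psi}|^{2}$. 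Since $\tau$ lies in the convex free set, whose extreme points are free pure states, the linear functional $\bra{\psi'}\cdot\ket{\psi'}$ is maximized over $\mathcal{F}_{\mathcal R}$ at some free pure $\ket{g}$, so $\bra{\psi'}\tau\ket{\psi'}\le \max_{\ket{g}\in\mathcal{F}^{\mathrm{pure}}_{\mathcal R}}|\braket{g}{\psi'}|^{2}$. Chaining the two inequalities yields $\max_{\ket{g}}|\braket{g}{\psi'}|^{2}\ge\max_{\ket{f}}|\braket{f}{\psi}|^{2}$, hence $\mathcal{M}_{\mathcal R}(\ket{\psi'})\le\mathcal{M}_{\mathcal R}(\ket{\psi})$, as required.

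The main obstacle is the final structural step: passing from the mixed free state $\tau$ back to a free pure state. This relies on the extreme points of $\mathcal{F}_{\mathcal R}$ being themselves free pure states, equivalently $\mathcal{F}_{\mathcal R}=\mathrm{conv}(\mathcal{F}^{\mathrm{pure}}_{\mathcal R})$, so that the maximum of a linear functional over $\mathcal{F}_{\mathcal R}$ is achieved on $\mathcal{F}^{\mathrm{pure}}_{\mathcal R}$. I would verify this holds for the theories of interest---coherence (diagonal states), entanglement (separable states), and non-stabilizerness (the stabilizer polytope)---and either invoke it as a standing feature of the convex resource theories under consideration or add it explicitly to the hypotheses. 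The remaining ingredient, the data-processing inequality for fidelity, is standard and requires no further work; the transitivity hypothesis $(ii)$ is what makes the clean geometric-measure reformulation---and thus the whole argument---possible.
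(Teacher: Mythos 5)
Your proof is correct, and at its core it follows the same route as the paper's: both use hypothesis $(ii)$ to identify the orbit $\{U^\dagger\ket{s_1} : U \in U_{\mathcal R}\}$ with the set of free pure states (so that $\mathcal{M}_{\mathcal R}$ becomes the geometric measure, a reformulation the paper itself records immediately after the theorem), and both then invoke a data-processing inequality --- you use monotonicity of fidelity where the paper uses contractivity of the trace distance, which are interchangeable here since $D_{\mathrm{tr}}^2 = 1 - F$ on pure states. The one genuine divergence is the step you flag as your ``main obstacle,'' which the paper never faces: by definition, $\tilde{\mathcal{O}}_{\mathcal{R}}$ consists of free operations that map pure states to pure states, so $\tau = \Lambda(\ketbra{f^{*}}{f^{*}})$ is automatically a free \emph{pure} state; the paper's version of this step is simply the observation that $\{\Lambda U^\dagger \ket{s_1}\}$ is contained in the orbit, so the minimization runs over a smaller set. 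Hence the hypothesis $\mathcal{F}_{\mathcal R} = \mathrm{conv}(\mathcal{F}^{\mathrm{pure}}_{\mathcal R})$ you propose to add is superfluous under the theorem's actual assumptions. That said, your convex-decomposition patch does buy a modest generalization: for theories where the free set is the convex hull of free pure states (coherence, entanglement, non-stabilizerness --- exactly the paper's examples), your argument yields monotonicity under any free operation whose output on the particular input $\ket{\psi}$ happens to be pure, without requiring $\Lambda$ to preserve purity on all inputs. One small point worth making explicit in the orbit identification: you need $U_{\mathcal R}$ to be closed under adjoints (a group), so that $U^\dagger\ket{s_1}$ indeed ranges over all free pure states; the paper implicitly assumes the same, and it holds in all the examples considered.
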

\begin{proof}
The \textit{positivity} of $\mathcal{M_R}(\ket{\psi})$ follows from the fact that texture is always non-negative. Let us now consider the basis $\mathsf{B}_{\mathcal R}$ contains the states $\{\ket{{i^\mathcal{R}}}\}_{i=1}^d$, with the texture-less state being given by $\ket{s^\mathcal{R}_1}=\frac{1}{\sqrt{d}} \sum_{i=1}^d \ket{i^\mathcal{R}}$, which is also a free pure state in resource theory $\mathcal{R}$. Since the given resource theory is such that all the free pure states can be connected to $\ket{s^\mathcal{R}_1}$ through some free unitary, the monotone is \textit{faithful}. We now proceed to show that $\mathcal{M_R}(\ket{\psi})$ is non-increasing under free operations $\tilde{\mathcal{O}}_{\mathcal{R}}$ that map pure states to pure states. To begin, (by omitting the superscript $\mathcal{R}$ in $\ket{s^\mathcal{R}_1}$) we observe that~\cite{Nielsen_2012} \[ \mathcal{T}_{\mathsf{B}_{\mathcal R}}(U\ket{\psi}) = D_{\text{tr}}(U\ket{\psi}, \ket{s_1})^2\equiv D(U\ket{\psi}, \ket{s_1}), \] where \( D_{\text{tr}}(\ket{\psi}, \ket{\phi}) = \sqrt{1-|\braket{\psi}{\phi}|^2} \) is the trace distance between the two pure states $\ket{\psi}$ and $\ket{\phi}$. Let us now consider an arbitrary free operation \(\Lambda \in \tilde{\mathcal{O}}_{\mathcal{R}}\). For brevity, we denote \(\Lambda(\ketbra{\psi}{\psi}) \equiv \Lambda(\ket{\psi})\). Then, we can write
\begin{eqnarray}
    \mathcal{M_R}(\Lambda(\ket{\psi}))&=&\min_{U\in U_\mathcal{R}}\mathcal{T}_{\mathsf{B}_{\mathcal R}}\left(U\Lambda\left(\ket{\psi}\right)\right)\nonumber\\
    &=&\min_{U\in U_\mathcal{R}} D(U\Lambda(\ket{\psi}), \ket{s_1})\nonumber\\
    &=&\min_{U\in U_\mathcal{R}} D(\Lambda(\ket{\psi}), U^\dagger\ket{s_1})\nonumber\\
    &\leq&\min_{U\in U_\mathcal{R}} D(\Lambda(\ket{\psi}), \Lambda U^\dagger\ket{s_1})\nonumber\\
    &\leq&\min_{U\in U_\mathcal{R}} D(\ket{\psi},  U^\dagger\ket{s_1})\nonumber\\
    &=&\mathcal{M_R}(\ket{\psi}),
\end{eqnarray}
where the inequality in the fifth line uses that the trace distance is contractive under trace-preserving operations~\cite{Nielsen_2012}, and the fourth line follows from the fact that applying \(\Lambda\) to \(U^\dagger\ket{s_1}\) may reduce the cardinality of the set over which the minimization is performed. Hence proved.
\end{proof}

{\textbf{Remark.} Note that although convexity of a given resource theory is not required for the proof of Theorem~$3$, it plays a crucial role in ensuring the existence of a unique minimum distance in the definition of the texture-based resource measure when expressed as a distance-based quantity. Moreover, extending the texture-based pure state resource monotone to mixed states through the convex roof construction imposes convexity in the underlying resource theory. For this reason, we limit our analysis within the framework of convex resource theories.}

{\textit{Extending texture-based resource monotone for mixed states.} The geometric measure of resource within a convex resource theory is defined as \(\ \mathcal{M}^{G}_{\mathcal{R}}(\rho) = 1 - \max_{\sigma \in \mathcal{F_R}} F(\rho, \sigma)\), where the fidelity \(F(\rho, \sigma) = \|\sqrt{\rho} \sqrt{\sigma}\|_1^2\), with \(\|A\|_1 = \text{Tr}\left[\sqrt{A^\dagger A}\right]\) denotes the Schatten 1-norm~\cite{Shimony1995Apr, Wei_PRA_2003, Streltsov_NJP_2010} and the maximization is performed over the set of free states, \(\mathcal{F}_\mathcal{R}\). For the convex resource theories considered in \textbf{Theorem}~\ref{th:measure}, the geometric measure of resource for pure states coincides with the texture-based pure state resource monotone, i.e., $\mathcal{M}^{G}_{\mathcal{R}}(\ket{\psi})=\mathcal{M_R}(\ket{\psi})$. Consequently, under the class of free operations $\Lambda\in\mathcal{O_R}$ that cannot generate resource even probabilistically from free states, i.e., $\Lambda:\ket{\psi}\to\{p_i,\ket{\psi_i}$ with $\ket{\psi}\in\mathcal{F_R}\implies\ket{\psi_i}\in\mathcal{F_R}\forall i$, the texture-based pure state monotone satisfies the \textit{strong monotonicity} condition, $\mathcal{M_R}(\ket{\psi})\geq\sum_i p_i\mathcal{M_R}(\ket{\psi_i})$. This property allows one to extend the texture-based resource monotone to mixed states via the convex roof construction. The resulting mixed-state monotone, denoted by $\mathcal{M}^{\mathrm{CRE}}_{\mathcal{R}}(\rho)$, is defined as the minimum average of $\mathcal{M}_{\mathcal{R}}$ over all pure state decompositions of $\rho$~\cite{formation96,Wei_PRA_2003,Shimony1995Apr}. For resource theories in which the set of free states $\mathcal{F_R}$ is formed by the convex hull of free pure states, such as coherence, non-stabilizerness, and entanglement, $\mathcal{M}^{\text{CRE}}_{\mathcal{R}}(\rho)$ coincides with the quantity $\mathcal{M}^{G}_{\mathcal{R}}(\rho)$. Specifically,
\begin{equation}
\mathcal{M}^{G}_{\mathcal{R}}(\rho) = \mathcal{M}^{\text{CRE}}_{\mathcal{R}} (\rho) = \min_{\{p_i, \ket{\Psi_i}\}} \sum_i p_i \mathcal{M_R}(\ket{\Psi_i}),
\label{eq:mixed_state_monotone}
\end{equation} 
where the minimization is over all pure state decompositions $\{p_i, \ket{\Psi_i}\}$ of the mixed state $\rho$. This construction thus provides a natural extension of the texture-based resource monotone from pure states to mixed states.}

\subsection{Construction of  non-stabilizerness and coherence monotones for pure states}
\label{subsec:texture_magic_coherence}
We now present compact expressions of resource monotones for convex resource theories, such as non-stabilizerness, coherence, and entanglement, that satisfy the conditions of admitting free pure states and free unitary operations, with the additional property that any two free pure states can be connected via a free unitary transformation. Similar treatment for entanglement will be presented in the next subsection.

\textit{Non-stabilizerness monotone.} {The stabilizer formalism plays a crucial role in quantum information and computation, as many prototypical quantum error-correcting codes are built upon it~\cite{gottesman_pra_1996,knill2004,poulin_prl_2005}, and Clifford operations within this framework can be implemented fault-tolerantly~\cite{Bravyi_quantum_2019,litinski_quantum_2019}. Consequently, the concept of a resource theory for non-stabilizerness, commonly referred to as magic, has emerged, wherein stabilizer states are treated as free states and stabilizer operations incur no implementation cost~\cite{Veitch2014Jan,howard_prl_2017}.} Let us investigate how the texture properties of a quantum state relate to its degree of non-stabilizerness. To evaluate the pure state non-stabilizerness monotone, we take $\mathsf{B}_{\mathcal R}$ as the computational basis and $U_\mathcal{R}$ as the set of all Clifford unitaries. For a pure qubit state $\ket{\psi}$, the texture-based non-stabilizerness monotone is given by \[\mathcal{M}_{\mathcal{NS}}(\ket{\psi}) = \frac{1}{2}\left(1 - \max_{k\in \{x,y,z\}}|m_k|\right),\] where \( m_k = \text{Tr}(\sigma^k \ketbra{\psi}{\psi}) \) represents magnetization with \( \sigma^k~(k=x,y,z) \) being the Pauli operators. {Furthermore, the pure state non-stabilizerness monotone can be generalized to mixed states via the convex roof extension (Eq.~\eqref{eq:mixed_state_monotone}), as described previously.}

\textit{Coherence monotone.} {In contemporary quantum technologies, coherence has emerged as one of the fundamental resource, underpinning a wide range of quantum phenomena, including quantum interference~\cite{Daniel_PRA_2006, Prillwitz_PRA_2015, Bera_PRA_2015, Bagan_PRL_2016, Biswas_RSPA_2017}, quantum metrology~\cite{Zhang_PRL_2019, Ruvi_PRX_2024, Ares_OL_2021, Pires_PRA_2018}, entanglement~\cite{streltsov_prl_2015, Chitambar_PRL_2016, Streltsov_PRL_2016, Mekala_PRAL_2021}, phenomena of non-locality without entanglement~\cite{PATRA_PLA_2025}, and quantum communication~\cite{Kelly_SciPost_2023, Khan_PTRSA_2017, Shi_PRA_2017}.} Similar to quantum state texture, the coherence of a quantum state is a basis dependent quantity~\cite{streltsov_rmp_2017}. Specifically, for a given basis, the states that are diagonal in that basis are referred to as incoherent and constitute the set of free states in the resource theory of coherence. {Correspondingly, free operations are those quantum operations that preserve the set of incoherent states~\cite{streltsov_rmp_2017}. It is worth noting that several distinct classes of free operations arise within this framework. The most general class consists of all trace-preserving, completely positive, and non-selective quantum operations that leave the set of incoherent states invariant, known as \emph{maximally incoherent operations} (MIO)~\cite{aberg_06}. A more restrictive, yet operationally motivated, class is that of \emph{incoherent operations} (IO), which are defined by the stronger requirement that coherence cannot be generated even probabilistically from an incoherent input state~\cite{baumgratz_prl_2014}.} To compute texture-based measure of coherence in the computational basis, we consider $\mathsf{B}_{\mathcal R}$ as the Fourier basis and define $U_\mathcal{R}$ as the set of unitaries that do not increase coherence, such as Heisenberg-Weyl unitaries (Pauli matrices for qubits) and any unitary that is diagonal in the computational basis. For an arbitrary pure qudit state, $\ket{\psi}=\sum_{i=0}^{d-1}c_i\ket{i}$, the coherence in computational basis $\{\ket{i}\}$ is quantified by \[\mathcal{M}_{\mathcal{C}}(\ket{\psi})=1-\max\{|c_0|^2,|c_1|^2,\cdots,|c_{d-1}|^2\},\] where $c_i=\langle i|\psi\rangle$. As before, the pure state coherence monotone admits a natural extension to mixed states via Eq.~(\ref{eq:mixed_state_monotone}). {Both the pure state monotone and its convex-roof extension to mixed states are monotonic under MIO. Furthermore, they satisfy the stricter condition of strong monotonicity under IO~\cite{streltsov_rmp_2017}.}

\subsection{Construction of entanglement monotone}
\label{subsec:texture_entanglement}
Entanglement \cite{horodecki_rmp_2009} is a fundamental resource for achieving quantum advantage in various tasks, including quantum communication and quantum key distribution. Within the resource theoretical framework, local operations and classical communication (LOCC) are considered as free operations, while all separable states are regarded as free states. Over the decades, extensive research has been conducted on detecting and quantifying entanglement, firmly establishing its theoretical foundation. Here we explore the relationship between the texture properties of a quantum state and the entanglement shared between parties. To begin with, let us introduce the notion of \emph{non-local texture}. 

\begin{definition}
\label{def:non-local_texture}
The non-local texture is defined as the minimum texture that a state can possess across the local bases. Mathematically, for a bipartite density matrix $\rho_{A_1A_2}$ acting on $\mathbb{C}^{d_{A_1}}\otimes\mathbb{C}^{d_{A_2}}$ (hereafter abbreviated as $d_{A_1}\otimes d_{A_2}$), it reads as 
\begin{eqnarray}
&&\mathcal{T}_{NL}^{A_1:A_2}\left(\rho_{A_1A_2}\right)\nonumber\\&&=\min_{\{p_i,\ket{\Psi_i}\}}\sum_i p_i \min_{\{{\mathsf B}_{A_1}\otimes {\mathsf B}_{A_2}\}}\mathcal{T}_{{\mathsf B}_{A_1}\otimes {\mathsf B}_{A_2}}\left(\ket{\Psi_i}_{A_1A_2}\right)\nonumber\\&&=\min_{\{p_i,\ket{\Psi_i}\}}\sum_ip_i\mathcal{T}^{\min}_{\mathbf{B}_{A_1}\otimes \mathbf{B}_{A_2}}\left(\ket{\Psi_i}_{A_1A_2}\right),
\label{eq:non-local_texture}
\end{eqnarray}
where $\rho_{A_1A_2}=\sum_ip_i\ketbra{\Psi_i}{\Psi_i}$ is a pure state decomposition of $\rho_{A_1A_2}$, and $\mathcal T^{\min}_{\mathbf{B}_{A_1}\otimes \mathbf{B}_{A_2}}$ is the minimum texture after optimizing over the set of all product bases in the Hilbert space $d_{A_1}\otimes d_{A_2}$, denoted by  $\mathbf{B}_{A_1}\otimes \mathbf{B}_{A_2}=\{\mathsf B_{A_1}\otimes \mathsf B_{A_2}\}$. The outer minimization is performed over all possible pure state decompositions of $\rho_{A_1A_2}$. 
\end{definition}

Notice that, similar to the texture-based resource monotone as defined in Eq.~\eqref{eq:resource_texture}, a non-vanishing non-local texture of a given state implies that the state possesses non-zero amount of texture with respect to all local bases. By identifying $\mathsf{B}_{\mathcal R}$ as an arbitrary product basis of the form $\mathsf{B}_{A_1} \otimes \mathsf{B}_{A_2}$ and $U_\mathcal{R}$ as the set of all local unitaries of the form $U_{A_1} \otimes U_{A_2}$, {it follows that, for any pure bipartite state $\ket{\Psi}_{A_1A_2}\in d_{A_1}\otimes d_{A_2}$, Eq.~\eqref{eq:resource_texture} reduces to $\mathcal{T}_{NL}^{A_1:A_2}\left(\ket{\Psi}_{A_1A_2}\right)=\mathcal{T}_{\mathbf{B}_{A_1}\otimes \mathbf{B}_{A_2}}^{\min}\left(\ket{\Psi}_{A_1A_2}\right)$. Similarly, for a mixed state $\rho_{A_1A_2}$, Eq.~\eqref{eq:mixed_state_monotone} coincides with Eq.~\eqref{eq:non-local_texture}.} Therefore, by Theorem~\ref{th:measure}, together with the fact that the set of separable states is the convex hull of pure product states, we arrive at the following corollary.

\begin{corollary}
\label{cor:non-local_texture_and_entanglement}
The non-local texture serves as a valid entanglement monotone for any bipartite state $\rho_{A_1A_2}$.   
\end{corollary}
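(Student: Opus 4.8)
The plan is to verify that bipartite entanglement fits precisely into the convex-resource-theory framework of Theorem~\ref{th:measure}, and then to lift the resulting pure-state statement to arbitrary mixed states through the convex-roof identification recorded in Eq.~\eqref{eq:mixed_state_monotone}. First I would fix the ingredients: the free states are the separable states, the free unitaries are the local unitaries $U_{A_1}\otimes U_{A_2}$, and the free pure states are exactly the product states $\ket{\phi}_{A_1}\otimes\ket{\chi}_{A_2}$, whose convex hull is the separable set. Taking $\mathsf{B}_{\mathcal R}=\mathsf{B}_{A_1}\otimes\mathsf{B}_{A_2}$ to be any product basis, the associated texture-less state is $\ket{s_1^{A_1}}\otimes\ket{s_1^{A_2}}$, a product and hence free state, so the basic requirement on $\mathsf{B}_{\mathcal R}$ is met. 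The only genuinely theory-specific hypothesis of Theorem~\ref{th:measure} that must be checked is that any two free pure states are connected by a free unitary; this holds because for any two product states one can choose local unitaries with $U_{A_1}\ket{\phi_1}=\ket{\phi_2}$ and $U_{A_2}\ket{\chi_1}=\ket{\chi_2}$, so that $U_{A_1}\otimes U_{A_2}$ does the job.

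With the hypotheses in place, Theorem~\ref{th:measure} applies verbatim with $U_{\mathcal R}$ taken to be the local unitaries. Combined with the reduction of Eq.~\eqref{eq:resource_texture} to $\mathcal{T}_{NL}^{A_1:A_2}(\ket{\Psi})=\mathcal{T}^{\min}_{\mathbf{B}_{A_1}\otimes\mathbf{B}_{A_2}}(\ket{\Psi})$ recorded just above the corollary, this shows that the pure-state non-local texture is positive, faithful (vanishing exactly on product states), and non-increasing under local pure-to-pure free operations. Thus $\mathcal{T}_{NL}^{A_1:A_2}$ is a bona fide pure-state entanglement monotone.

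The remaining step is the extension to mixed states. Since the separable set is the convex hull of pure product states, Eq.~\eqref{eq:mixed_state_monotone} identifies the non-local texture of $\rho_{A_1A_2}$ defined in Eq.~\eqref{eq:non-local_texture} with both the convex-roof extension of $\mathcal{T}_{NL}^{A_1:A_2}(\ket{\Psi})$ and the geometric measure of entanglement. I would then invoke the standard fact that the convex roof of a local-unitary-invariant pure-state entanglement monotone is itself an entanglement monotone under LOCC, equivalently that the geometric measure of entanglement is a valid monotone~\cite{Shimony1995Apr, Wei_PRA_2003, Streltsov_NJP_2010}, to conclude that $\mathcal{T}_{NL}^{A_1:A_2}(\rho_{A_1A_2})$ does not increase under LOCC.

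I expect this last step to be the main obstacle: Theorem~\ref{th:measure} only certifies monotonicity for pure states under pure-to-pure free operations, whereas a general LOCC protocol can send a pure input to a mixed output, so the pure-state theorem does not close the argument by itself. Bridging this gap rigorously requires the convex-roof machinery, namely non-increase on average under selective local measurements together with convexity. The cleanest way around it is precisely the identification with the geometric measure in Eq.~\eqref{eq:mixed_state_monotone}, which lets me borrow its established LOCC-monotonicity rather than re-derive the convex-roof monotonicity by hand.
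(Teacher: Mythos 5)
Your proposal is correct and follows essentially the same route as the paper: identify entanglement as a convex resource theory satisfying the hypotheses of Theorem~\ref{th:measure} (product states interconvertible by local unitaries, texture-less state of a product basis being free), apply the theorem to get the pure-state monotone, and then pass to mixed states via the convex-roof/geometric-measure identification of Eq.~\eqref{eq:mixed_state_monotone} together with the fact that separable states form the convex hull of pure product states. If anything, you are more explicit than the paper about the genuine subtlety --- that Theorem~\ref{th:measure} only covers pure-to-pure free operations, so full LOCC monotonicity must be imported from the established properties of the convex-roof/geometric measure~\cite{Wei_PRA_2003,Streltsov_NJP_2010} --- a step the paper leaves implicit in its citations.
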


A bipartite pure state in the Hilbert space $d_{A_1}\otimes d_{A_2}$, where $d_{A_1} \leq d_{A_2}$, can be expressed via the Schmidt decomposition as $\ket{\Psi}_{A_1A_2} = \sum_{i=1}^{d_{A_1}} \sqrt{\lambda_i} \ket{i}\ket{i'}$. Here, the Schmidt coefficients $\{\lambda_i\}$ are ordered such that $0 \leq \lambda_{i+1} \leq \lambda_i \leq 1$ and $\sum_i \lambda_i = 1$, while $\{\ket{i}\}$ and $\{\ket{i'}\}$ form orthonormal sets, known as the Schmidt basis. The texture-based entanglement measure (alternatively, the non-local texture) for an arbitrary pure bipartite state can take the compact form as 
\begin{equation} 
\mathcal{T}_{NL}^{A_1:A_2}\left(\ket{\Psi}_{A_1A_2}\right) = \mathcal{M}_{\mathcal{E}}(\ket{\Psi}_{A_1A_2}) = 1 - \lambda_1,
\end{equation}
with $\lambda_1$ being the largest Schmidt coefficient~\cite{vidal_prl_1999,horodecki_rmp_2009}. For a mixed bipartite state $\rho_{A_1A_2}$, this generalizes to 
\begin{align} 
\nonumber \mathcal{T}_{NL}^{A_1:A_2}\left(\rho_{A_1A_2}\right)&= \mathcal{M}_{\mathcal{E}}^{\text{CRE}}(\rho_{A_1A_2}) \\& =\min_{\{p_i,\ket{\Psi_i}\}}\sum_i p_i(1-\lambda_1^{\ket{\Psi_i}}),
\end{align}
where the minimization is performed over all possible pure state decompositions of $\rho_{A_1A_2}=\sum_ip_i\ketbra{\Psi_i}{\Psi_i}$, and $\lambda_1^{\ket{\Psi_i}}$ represents the largest Schmidt coefficient of the constituent state $\ket{\Psi_i}$.

\textbf{Multipartite entanglement characterization.} The notion of non-local texture can be extended to the multipartite scenario by choosing either the set of fully local bases of the form $\{B_{A_1}\otimes B_{A_2}\otimes\cdots\otimes B_{A_N}\}$ or the set of bi-local basis of the form $\{B_{A_1A_2\cdots A_i}\otimes B_{A_{i+1\cdots N}}\}$. 
A multipartite state that exhibits non-zero texture with respect to all fully local bases is said to possess \emph{multipartite non-local texture}, whereas exhibiting texture across all bi-local bases indicates the presence of \emph{genuine multipartite non-local texture}. Recall that a multipartite state $\ket{\Psi}_{A_1A_2\cdots A_N} \in \mathbb{C}^{d_{A_1}} \otimes \mathbb{C}^{d_{A_2}} \otimes \cdots \otimes \mathbb{C}^{d_{A_N}}$ is said to be entangled if it is entangled across at least one bipartition, and is termed genuinely multipartite entangled if it is entangled across every possible bipartition. Accordingly, a texture-based measure of \emph{multipartite entanglement} (ME) can be defined by choosing $\mathsf{B}_{\mathcal R}$ as an arbitrary fully local basis, and $U_\mathcal{R}$ as the set of all fully local unitaries of the form $U_1 \otimes U_2 \otimes \cdots \otimes U_N$. On the other hand, to define a texture-based measure of \emph{genuine multipartite entanglement} (GME), we instead set $U_\mathcal{R}$ to be the set of all bi-local unitaries, such as $U_{12\cdots i} \otimes U_{i+1\cdots N}$ \(\forall i\). Upon evaluation, the texture-based measure of genuine multipartite entanglement for an arbitrary state $\ket{\Psi}_{A_1A_2\cdots A_N}$ reduces to (cf. \cite{ggm10})
\begin{equation}
\mathcal{M}_{\mathcal{GME}}(\ket{\Psi}) = 1 - \max_{\mathcal{A:\bar{A}}} \lambda_{1(\mathcal{A:\bar{A}})},
\label{eq:GME_texture}
\end{equation}
where $\lambda_{1(\mathcal{A:\bar{A}})}$ denotes the largest Schmidt coefficient in the bipartition $\mathcal{A:\bar{A}}$, and the maximization is taken over all possible bipartitions of the system. It is noteworthy that $\mathcal{M}_{\mathcal{GME}}(\ket{\Psi})$ coincides exactly with the \emph{generalized geometric measure} (GGM) of genuine multipartite entanglement for pure states~\cite{ggm10} and can be extended to mixed states~\cite{das_pra_2016}.

\section{Characterizing phase transitions using texture-based metrics}
\label{sec:texture_QPT}
One significant advantage of the texture measure lies in its experimental accessibility and computabality. For instance, while the entanglement~\cite{osborne_pra_2002}, discord-based measures~\cite{raoul_prb_2008} can capture phase transition, their evaluation remains challenging in laboratories as it requires full tomography of the state. In contrast, the texture measure proposed in \cite{Parisio2024}, referred to as rugosity, offers a more experimentally accessible alternative that can be obtained without tomography. In particular, it is defined for a pure quantum state \(\ket{\Psi}\) as
\begin{equation}
    \mathfrak{R}(\ket{\Psi}) = -\ln\left[|\overline{\langle{+}|}\Psi\rangle|^2\right],
\end{equation}
where the texture is evaluated in the computational basis. Operationally, rugosity corresponds to evaluating the probability amplitude of the quantum state \(\ket{\Psi}\) projected onto the product state \( \overline{\ket{+}} = \ket{+}^{\otimes N} \) with \( \ket{+} \) being the eigenstate of \( \sigma^x \) with eigenvalue \( +1 \). 
Given this formulation, it is interesting to explore the potential of using the texture to detect quantum phase transitions, which we demonstrate next. 
{Note that although choosing a basis other than the computational basis may lead to different numerical values of $\mathfrak{R}$, this does not alter the qualitative behavior relevant for detecting critical points in the phase diagram. In particular, the qualitative behavior of the curvature in the vicinity of the critical point remains unchanged, even though the numerical values of $\mathfrak{R}$ may vary. Consequently, the identification of the phase transition points is unaffected. A similar kind of observation has also been reported in the studies of detecting quantum phase transitions using quantum coherence measures \cite{chen_pra_2016,Li2016May,li_prb_2020}.}

\subsection{Ising spin chain with longitudinal magnetic field} 

Let us consider the ground state of an Ising spin chain~\cite{sachdev_2011} of length $N$, governed by the Hamiltonian
\begin{equation}
    H = -\left[\sum_{j=1}^{N} \frac{J'}{4} \sigma_j^x \sigma_{j+1}^x + \frac{h'}{2} \sum_{j=1}^{N} \sigma_j^z + \frac{g'}{2} \sum_{j=1}^{N} \sigma_j^x \right],
    \label{eq:Hamiltonian}
\end{equation}
where \(\sigma^k\) (\(k = x, y, z\)) are the Pauli matrices, \(J'\), \(h'\) and \(g'\) represents the strength of the nearest-neighbor spin interaction, the strength of the transverse magnetic field, and the longitudinal magnetic field respectively. We introduce dimensionless parameters such as \(h = h'/J'\) and \(g = g'/J'\) with considering periodic boundary conditions (PBC), i.e., \(\sigma_{N+1} \equiv \sigma_1\). Note that the nonvanishing \(g\) prohibits the exact solvability of the model. Accordingly, we examine two distinct cases: the integrable regime with \(g = 0\), and the non-integrable regime with \(g \ne 0\). When \( h < 1 \) and \( g < 0 \), the ground state belongs to the ferromagnetic down phase, while for \( h < 1 \) and \( g > 0 \), it belongs to the ferromagnetic up phase~\cite{Yuste2018Apr}. The line \( g = 0 \) corresponds to a first-order phase transition between these two ferromagnetic phases. For \( h > 1 \), \(g=0\), the ground state lies in the paramagnetic phase, and \( h = \pm 1 \) represents the second-order quantum phase transition points separating the ferromagnetic and paramagnetic regions.

\begin{figure}
    \centering
    \includegraphics[width=\linewidth]{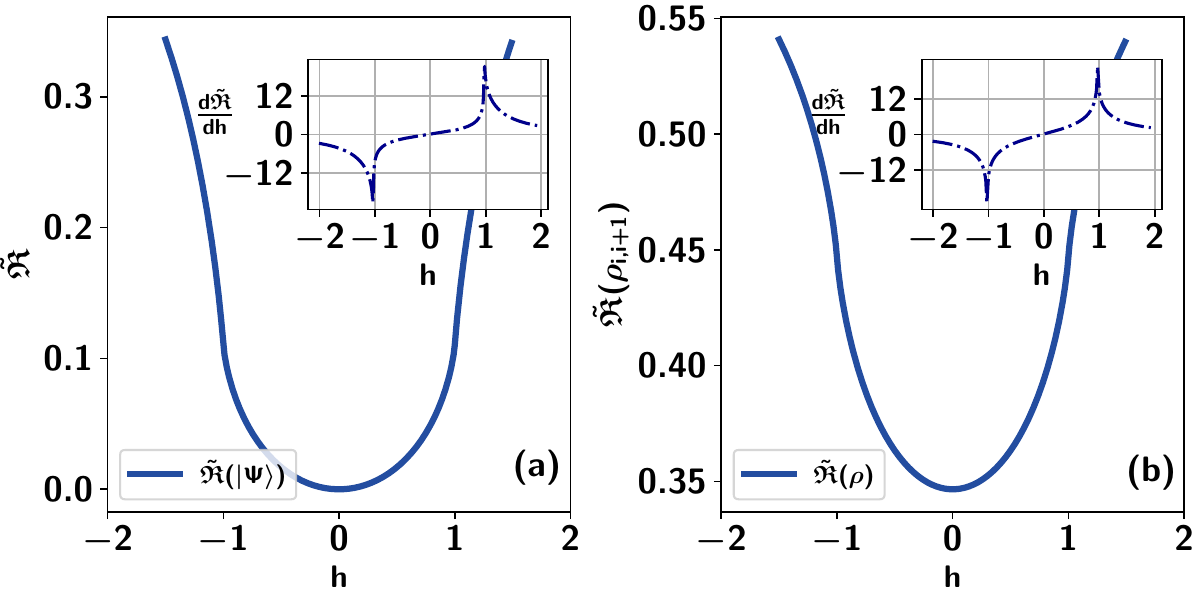}
    \caption{\textbf{Normalized rugosity of the ground state against the external magnetic field, \(h\).} Normalized value of the texture (ordinate) of the ground state (panel (a)) and its reduced two-party density matrix (panel (b)) for the Ising chain,  as functions of the transverse field $h$ (abscissa). The plots exhibit symmetry about \(h = 0\) and display distinct curvature changes near the phase transition points at \(h = \pm 1\), highlighted by the derivative \(\frac{d\mathfrak{R}}{dh}\) shown in the insets. Here \(g = 0\) and \(N = 512\). All axes are dimensionless.}
    \label{fig:ising_tex}
\end{figure}
\emph{Vanishing longitudinal magnetic field \((g=0)\).} In this case, the model becomes analytically solvable by mapping the spin degrees of freedom to free fermions via the Jordan-Wigner transformation \cite{barouch_pra_1970_1, barouch_pra_1970_2, lieb1961}, followed by a Fourier transformation, allowing the diagonalization of the Hamiltonian. 
{Accordingly, the various phases of the ground state \( \ket{\Psi} \) can be characterized by the texture function, given by
\begin{equation}
    \mathfrak{R}(\ket{\Psi}) = -\sum_{p=1}^{N/2} \ln\left[\sin^2 \left(\theta_p - \frac{\phi_p}{2}\right)\right],
\end{equation}
where \( \phi_p = \frac{(2p - 1)\pi}{N} \), and \( \theta_p \) denotes the Bogoliubov angle, defined as \( \theta_p = \cos^{-1}\left( \frac{\cos\phi_p -h - \lambda }{ \sqrt{ 2\left( \lambda^2 - (\cos\phi_p - h)^2 \right) } } \right) \), with \( \lambda = \sqrt{ (h - \cos\phi_p)^2 + \sin^2\phi_p } \).} A detailed derivation can be found in Appendix~\ref{sec:xy_spin_chain}. As the rugosity increases with the increment of system size, we use normalized rugosity given as \(\tilde{\mathfrak{R}}=\mathfrak{R}/N\) with \(N\) being the system size. In Fig.~\ref{fig:ising_tex}(a), we plot the normalized rugosity of the ground state by varying the external magnetic field \(h\) with \(N=512\). The plot reveals that \(\tilde{\mathfrak{R}}(\ket{\Psi})\) vanishes when the external field is zero. As the external field increases, the texture grows and eventually saturates to a finite value. This behavior can be understood by analyzing the phases of the ground state of the model. Specifically, in the limit \(h \sim 0\), the ground state approximates an eigenstate of the operator \(\sum_j \sigma_j^x \sigma_{j+1}^x\), which is close to the product state \(\overline{\ket{+}}\). Consequently, the texture vanishes at \(h \to 0\). In contrast, in the regime \(h \gg 1\), the ground state tends toward that of \(\sum_j \sigma_j^z\), corresponding to a paramagnetic phase. In this regime, the texture saturates to a finite value. Therefore, as \(h\) increases from zero, the texture of the ground state rises and saturates. A notable aspect of this growth is the distinct change in curvature near \( h = \pm 1 \), which can be captured by the kink in \(\frac{d\mathfrak{R}}{dh}\), thereby detecting the critical point of the spin chain (see the inset of Fig.~\ref{fig:ising_tex}(a)). \textcolor{black}{
So far, we have calculated the rugosity of the entire ground state, which may not always be practically feasible. In such cases, analyzing subsystem properties can be insightful for detecting phase transitions. For instance, quantities like bipartite-entanglement~\cite {Osterloh2002Apr,osborne_pra_2002} and quantum discord~\cite{raoul_prb_2008}, when computed for a two-party reduced subsystem, often indicate critical points in the system. Motivated by this observation, we now focus on the rugosity of a two-party reduced subsystem within the full $N$-party system. In particular, we compute the rugosity of a subsystem of two nearest-neighbor parties, which can be efficiently evaluated by exploiting the symmetry of the Hamiltonian~\cite{osborne_pra_2002,sende_pra_2005,mishra_pra_2013,sadhukhan_pre_2016}. Furthermore, under periodic boundary conditions, all such two nearest-neighbor reduced density matrices parties are equivalent,} \textcolor{black}{ given by
\begin{figure}
    \centering
    \includegraphics[width=\linewidth]{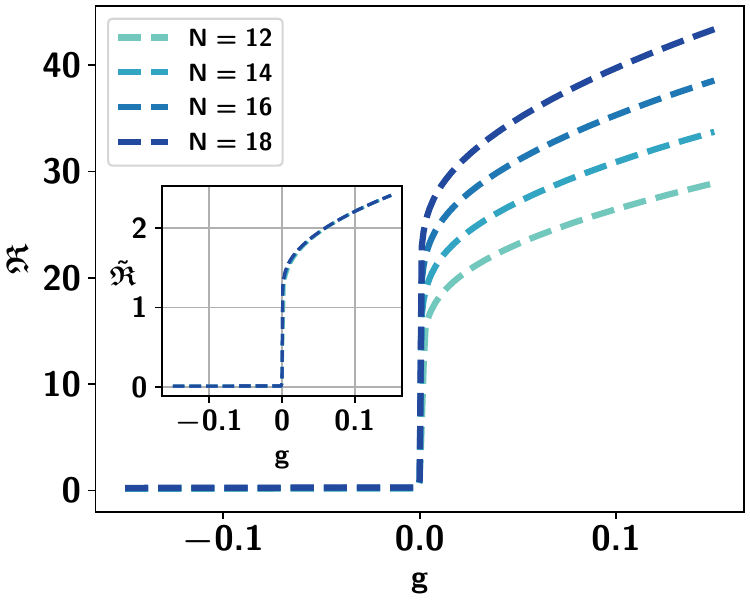}
    \caption{{\textbf{Rugosity (\(\mathfrak{R}\)) (ordinate) of the ground state vs the longitudinal magnetic field, \(g\) (abscissa).} We present the texture of the ground state for an Ising chain subjected to a longitudinal and transverse magnetic field with different system sizes (normalized rugosity is plotted in the inset). The texture remains zero for \(g < 0\) and increases with \(g\) for \(g \geq 0\), capturing the phase transition from the paramagnetic to the ferromagnetic phase at the critical point \(g = 0\). We set \(h = 0.5\). All axes are dimensionless.}}
    \label{fig:texture_hx}
\end{figure}
\begin{equation}
    \rho_{i,i+1}=\frac{1}{4}\bigg(\mathbb{I}+m_z(\sigma_i^z+\sigma_{i+1}^z)+\sum_{\alpha\in\{x,y,z\}} C_{\alpha\alpha}\sigma_i^\alpha\sigma_{i+1}^\alpha\bigg),
    \label{eq:two_party_reduced}
\end{equation}
where \(m_z\) denotes the magnetization along the \(z\)-direction, and \(C_{\alpha\alpha}\) represents the spin-spin correlations in the \(x\), \(y\), and \(z\) directions, respectively. The rugosity of this state, when expressed in the computational basis, can be straightforwardly computed as
\begin{equation}
    \mathfrak{R}(\rho_{i,i+1})=-\ln\left[\frac{1+C_{xx}}{4}\right].
\end{equation}
We observe that the rugosity of the state given in Eq.~\eqref{eq:two_party_reduced} depends solely on the correlation function in the \(x\)-direction. This dependence arises from the structure of the density matrix. Since \(C_{xx}\) exhibits a change in curvature near the phase transition, the rugosity also reflects such a change (as demonstrated in Fig. \ref{fig:ising_tex}(b)) near the critical point. Therefore, the texture of the two-party reduced subsystems obtained from the ground state can also serve as an effective indicator of phase transitions, and is experimentally more feasible to analyze.
}
Furthermore, to demonstrate that this phenomenon is not limited to integrable models, we consider a non-integrable spin chain model obtained by introducing a longitudinal magnetic field, i.e., by setting \( g \ne 0 \) in the Hamiltonian defined in Eq.~\eqref{eq:Hamiltonian}.

\emph{Nonvanishing longitudinal magnetic field \((g\neq0)\).} The texture \( \mathfrak{R} \) of \(N\) party ground state remains nearly vanishing for \( g \le 0 \),  while it becomes appreciably non-zero and increases with \(g>0\) having a sharp increase at \(g=0\) and continues to grow with \(g>0\). This behavior indicates that the texture of the ground state effectively captures the transition point. Hence, the texture serves as a useful indicator of phase transitions in both integrable and non-integrable models.

In Fig.~\ref{fig:texture_hx}, we plot the texture of the ground state corresponding to \(g\ne 0\). The figure demonstrates that the texture \( \mathfrak{R} \) remains nearly vanishing for \( g < 0 \), but as \( g \) crosses \( g = 0 \), it becomes appreciably nonvanishing and shows a sharp increase. This behavior indicates that the texture of the ground state effectively captures the transition point. Hence, the texture serves as a useful indicator of phase transitions in both integrable and non-integrable models.


\section{Conclusion}
\label{sec:conclusion}
Evaluating the significance of the recently defined texture resource theory  requires establishing the connection between quantum state texture and other established resource theories.  By introducing the concepts of maximum and minimum texture with respect to a given set of bases, we demonstrated that the purity of a quantum state is proportional to their difference, which is monotone under unital operations, and  provides a lower bound of the R\'enyi entropy-based purity. 

We proposed a general framework of texture-based monotone developed for convex resource theories with a set of free pure states that are closed under free unitary operations. Additionally, we obtained the expression of these monotones for pure states in the case of non-stabilizerness and coherence. 
Its connection with the entanglement theory was established by showing that a bipartite state is entangled if and only if it has a non-vanishing non-local texture, which is the minimum texture that the state can have across the local bases.  The non-local texture, which has been proven to be an entanglement monotone, is quantitatively equivalent to the geometric measure of entanglement in the multipartite domain and  can be extended for mixed states through the convex roof extension.

We found that, in the presence of longitudinal and transverse magnetic fields, quantum phase transitions in the Ising chain may be detected by the texture of the ground state with respect to almost any basis.  The simplicity lies in the fact that the texture of reduced subsystems of the ground state can also reveal the quantum critical point.  As the texture can be experimentally assessed without complete state tomography, our findings suggests that an uneven (i.e., non-uniform) landscape of density matrix elements might not only effectively bridge diverge quantum resources but also serve as a useful tool for detecting  critical phenomena  in quantum many-body systems.

\acknowledgements

We acknowledge the use of \href{https://github.com/titaschanda/QIClib}{QIClib} --  a modern C++ library designed for general-purpose quantum information processing and quantum computing. We also thank the cluster computing facility at the Harish-Chandra Research Institute for computational support. We acknowledge support from the project entitled ``Technology Vertical - Quantum Communication'' under the National Quantum Mission of the Department of Science and Technology (DST)  (Sanction Order No. DST/QTC/NQM/QComm/$2024/2$ (G)). This research is partially supported by the INFOSYS scholarship for senior students.

\bibliography{ref}

\appendix
\section{Ising-spin chain model diagonalization}
\label{sec:xy_spin_chain}

In this work, we primarily focus on the ground state of an Ising spin chain, described by the Hamiltonian:

\begin{eqnarray}
    H &=& -\Bigg [\sum_{j=1}^{N} \frac{J'}{4} \sigma_j^x\sigma_{j+1}^x   + \frac{h'}{2}\sum_{j=1}^{N}\sigma_j^z\Bigg ].
\end{eqnarray}
This model can be solved analytically by mapping the spins to free fermions using the Jordan-Wigner transformation, which is given as
\begin{align}
    \sigma^x_n &= \left( c_n + c_n^\dagger \right) \prod_{m<n}(1-2 c_m^\dagger c_m) \\
    \sigma^y_n &= i\left( c_n - c_n^\dagger \right) \prod_{m<n}(1-2 c_m^\dagger c_m) \\
    \sigma^z_n &= 1 - 2 c_n^\dagger c_n,
 \label{eq:Jordan_wigner}
\end{align}
where \(c_m^\dagger\) and \(c_m\) are the fermionic creation and annihilation operators, respectively, obeying the standard fermionic commutation relations. Using this transformation, the Hamiltonian can be expressed in its free fermionic form as
\begin{eqnarray}
   H &=& \nonumber\sum_{n}\frac{1}{2}\left (c_n^\dagger c_{n+1} + c_n^\dagger c_{n+1}^\dagger + \text{h.c.} \right) + h\left(c_n^\dagger c_n - \frac{1}{2}\right).\\&&
   \label{eq:JW_hamil}
\end{eqnarray}
The Hamiltonian in Eq.~\eqref{eq:JW_hamil} can be diagonalized via a Fourier transform, enabling the computation of various two-point correlators in the fermionic basis. To proceed, we apply the Fourier transformation, which maps the fermionic operators to their corresponding conjugate momentum modes as follows:
\begin{eqnarray}
    \nonumber c_j &=& \frac{1}{\sqrt{N}}\sum_{p=-N/2}^{N/2}\exp\left(-\frac{2\pi jp}{N}\right)a_p, \\ 
    \text{and} \quad c_j^\dagger &=& \frac{1}{\sqrt{N}}\sum_{p=-N/2}^{N/2}\exp\left(\frac{2\pi jp}{N}\right)a_p^\dagger.
\end{eqnarray}
Imposing periodic boundary conditions imparts translational invariance to the system, making momentum a good quantum number. This allows the Hamiltonian to be decomposed into independent momentum sectors, expressed as \(H_{JW} = \otimes H_p\), where each \(H_p\) is given by
\begin{eqnarray}
    \nonumber H_p &=& \sum_{p>0} (h + \cos\phi_p)(a_p^\dagger a_p + a_{-p}^\dagger a_{-p}) \nonumber\\
    && + \sin\phi_p \left[ a_p^\dagger a_{-p}^\dagger + a_p a_{-p} \right] - h, \nonumber
    \label{pksea_xy}
\end{eqnarray}
with \(\phi_p = (2p - 1)\pi / N\) and \(p \in \{1, \ldots, N/2\}\). We adopt anti-periodic boundary conditions, i.e., \(c_{N+1} = -c_1\), as discussed in \cite{santoro_ising_beginners_2020}. In the thermodynamic limit \(N \to \infty\), the momentum becomes continuous, leading to \(\phi_p \in (0, \pi)\). In the basis \(\{|0\rangle, a^\dagger_p a^\dagger_{-p}|0\rangle\}\), the Hamiltonian \(H_p\) takes the form
\begin{equation}
    H_p = \left[\begin{array}{cc}
    -h - \cos \phi_p & -i \sin \phi_p \\
    i \sin \phi_p & \cos \phi_p + h
    \end{array}\right].
    \label{eq:xy_P}
\end{equation}
This Hamiltonian can be diagonalized using a Bogoliubov transformation:
\begin{equation}
    \begin{pmatrix}
        c_p \\ c_{-p}^\dagger
    \end{pmatrix}
    =
    \begin{pmatrix}
        u_p & -v_p^{*} \\
        v_p & u_p^{*}
    \end{pmatrix}
    \begin{pmatrix}
        \tau_p \\ \tau_{-p}^\dagger
    \end{pmatrix},
\end{equation}
where \(u_p = \cos\theta_p\), \(v_p = i\sin\theta_p\), and \(\theta_p\) is the Bogoliubov angle defined by \(\cos\theta_p = \frac{-h + \cos\phi_p - \lambda}{\sqrt{2(\lambda^2 - (\cos\phi_p - h)^2)}}\), with \(\lambda = \sqrt{(h - \cos\phi_p)^2 + ( \sin\phi_p)^2}\). The energy eigenvalues corresponding to each momentum \(p\) are
\begin{equation}
    \epsilon_{\pm p} = \pm \lambda.
    \label{eq:dispersion}
\end{equation}

The ground state texture can then be calculated via the Bogoliubov transformation. By applying the Jordan-Wigner and Fourier transformations, the texture of the ground state becomes
\begin{equation}
    \mathfrak{R}(\ket{\Psi}) = -\sum_{p=1}^{N/2} \ln\left[\text{abs}\left(v_p \cos\frac{\phi_p}{2} - i u_p \sin\frac{\phi_p}{2} \right)^2\right].
\end{equation}

\end{document}